\providecommand{\U}[1]{\protect\rule{.1in}{.1in}}
\newtheorem{theorem}{Theorem}
\newcommand{\qed}{{\hfill$\Box$}}
\newenvironment{proof}{\noindent \textbf{{Proof~} }}{\qed}
\let\originalleft\left
\let\originalright\right
\def\left#1{\mathopen{}\originalleft#1}
\def\right#1{\originalright#1\mathclose{}}
\begin{document}
%\title{Tightening the memory-assisted uncertainty principle with quantum
% discord and classical correlation}

\title{Quantum discord and classical correlation can tighten the uncertainty 
principle in the presence of quantum memory}
 
\author{Arun Kumar Pati}
\affiliation{Harish-Chandra Research Institute, Chhatnag Road, Jhunsi, 
Allahabad 211 019, India,}
\author{Mark M. Wilde}
\affiliation{School of Computer Science, McGill University, Montreal,
 Quebec H3A 2A7, Canada,}
\author{A. R. Usha Devi}
\affiliation{Department of Physics, Bangalore University, Bangalore-560 056, 
India,}
\affiliation{Inspire Institute Inc., Alexandria, Virginia, 22303, USA,}
\author{A. K. Rajagopal}
\affiliation{Harish-Chandra Research Institute, Chhatnag Road, Jhunsi, 
Allahabad 211 019, India,}
\affiliation{Inspire Institute Inc., Alexandria, Virginia, 22303, USA,}
\author{Sudha}
\affiliation{Inspire Institute Inc., Alexandria, Virginia, 22303, USA,}
\affiliation{Department of Physics, Kuvempu University, Shankaraghatta, 
Shimoga-577 451, India.}
\date{\today}

\begin{abstract}
Uncertainty relations capture the essence of the inevitable randomness
associated with the outcomes of two incompatible quantum measurements.
Recently, Berta \textit{et al}.\ have shown that the lower bound on the
uncertainties of the measurement outcomes depends on the correlations between
the observed system and an observer who possesses a quantum memory. If the
system is maximally entangled with its memory, the outcomes of two
incompatible measurements made on the system can be predicted precisely. Here,
we obtain a new uncertainty relation that tightens the lower bound of Berta
\textit{et al}., by incorporating an additional term that depends on the
quantum discord and the classical correlations of the joint state of the
observed system and the quantum memory. We discuss several examples of states
for which our new lower bound is tighter than the bound of Berta \textit{et
al}. On the application side, we discuss the relevance of our new inequality
for the security of quantum key distribution and show that it can be used to
provide bounds on the distillable common randomness and the entanglement of
formation of bipartite quantum states.

\end{abstract}

\pacs{03.65.Ta, 03.65.Db, 03.65.Ud}
\maketitle

The Heisenberg uncertainty relation is one of the fundamental concepts in
quantum theory \cite{Heisenberg}.  In recent
years, entropic uncertainty relations (EURs)~\cite{MU} have gained importance
because of their operational applications for privacy issues in cryptographic
tasks \cite{Berta}, in the detection of entangled states \cite{Berta}, and in
constructing error-correcting codes for communicating quantum or private
classical information \cite{RB08,WR12,WR12a}. A revision of the Maasen-Uffink
EUR \cite{MU}, which includes the possibility of a quantum
memory correlated with the observed system, has been identified recently by
Berta \textit{et al}.~\cite{Berta}. Interestingly, if the system is maximally
entangled with its memory, the outcomes of two incompatible measurements made
on distinct and identical copies of such a state can
be predicted precisely by an observer with access to the
quantum memory. It may be noted that the possibility of violating the 
uncertainty relation using an entangled pair was conceived as early as 
1934 by Karl Popper~\cite{Popper}, who had proposed an experiment to do so.

The uncertainty principle sets limits on our ability to predict the outcomes
of two incompatible measurements, and it was originally formulated by
Heisenberg~\cite{Heisenberg} for position and momentum observables.
Robertson~\cite{rob} and Schr\"{o}dinger~\cite{sch} generalized it to
arbitrary pairs of non-commuting observables $P$ and $Q$ and it is well known
in the following form:%
\begin{equation}
(\Delta P)(\Delta Q)\geq\frac{1}{2}\,|\langle\lbrack P,Q]\rangle|,
\end{equation}
where the uncertainties in the measurements are quantified in terms of the
standard deviations $(\Delta P)\equiv\sqrt{\langle P^{2}\rangle-\langle
P\rangle^{2}}$, $(\Delta Q)\equiv\sqrt{\langle Q^{2}\rangle-\langle
Q\rangle^{2}}$, and the commutator $[P,Q]\equiv PQ-QP$.

Deutsch subsequently advocated for characterizing the \textquotedblleft
spread\textquotedblright\ in the outcomes of two incompatible measurements via
Shannon entropies%
\begin{equation}
H(P)\equiv-\sum_{i}p_{i}\log_{2}p_{i},\ \ \ H(Q)\equiv-\sum_{j}q_{j}\log
_{2}q_{j}, \label{eq:shannon-ent}%
\end{equation}
where $p_{i}$ and $q_{j}$ are the probability distributions of the 
measurement outcomes,
rather than with standard deviations \cite{Deutch} because EURs provide an
information-theoretic basis for quantifying uncertainties \cite{Wehner}. The
entropic uncertainty bound for position and momentum observables was first
proposed by Hirschmann~\cite{Hirschmann} more than five decades ago, and it
was improved further during later years~\cite{Beckner, bial1}. The entropic
constraints for the outcomes of an arbitrary pair of observables $P$ and $Q$
were formulated by Deutsch~\cite{Deutch} and were subsequently improved by
various authors \cite{partovi, bial2, kraus}. A conjecture put forth by
Kraus~\cite{kraus} was proved by Maassen and Uffink~\cite{MU}, leading to the
following EUR:%
\begin{equation}
H(P)+H(Q)\geq-2\log_{2}c(P,Q), \label{mubound}%
\end{equation}
where $c(P,Q)\equiv\mathrm{max}_{i,j}|\langle p_{i}|q_{j}\rangle|$,
$\{|p_{i}\rangle\}$, $\{|q_{j}\rangle\}$ are the eigenvectors of $P$ and $Q$,
respectively. Later, this bound was improved and tightened by several authors
\cite{KKRP,Berta,Coles11,Coles} to hold for arbitrary POVMs and to include a
state-dependent term in the lower bound depending on the entropy of the
state of the system:%
\begin{equation}
H\left(  P\right)  +H(Q)\geq-2\log_{2}c\left(  P,Q\right)  +S\left(
\rho\right)  , \label{berta-improvement}%
\end{equation}
where $P$ is now more generally a positive operator-valued measure
(POVM)\ with elements $\left\{  \Lambda_{p}\right\}  $, $Q$ is a POVM\ with
elements $\left\{  \Gamma_{q}\right\}  $, the entropies $H\left(  P\right)  $
and $H\left(  Q\right)  $ are the Shannon entropies of the
distributions $\text{Tr}\left\{  \Lambda_{p}\rho\right\}  $ and $\text{Tr}%
\left\{  \Gamma_{q}\rho\right\}  $, respectively, with $\rho$ the state of the
quantum system, $c\left(  P,Q\right)  $ is now an incompatibility measure 
for the POVMs:
$
c\left(  P,Q\right)  \equiv\max_{p,q}\left\Vert \sqrt{\Lambda_{p}}\sqrt
{\Gamma_{q}}\right\Vert _{\infty}^{2},
$
and the entropy $S\left(  \rho\right)  $ is the von Neumann entropy of $\rho$:
$S\left(  \rho\right)  \equiv-$Tr$\left\{  \rho\log\rho\right\}  $.

The inequality in (\ref{berta-improvement}) bounds the entropies of the outcomes of
$P$ and $Q$ in
the system when it is not correlated with a quantum memory. Accessibility of a
quantum memory (so that the system $A$ and the memory $B$ are in some state
$\rho_{AB}$) leads to a refinement of
(\ref{berta-improvement})~\cite{Berta}:
\begin{equation}
S(P|B)+S(Q|B)\geq-2\log_{2}c(P,Q)+S(A|B), \label{bertabound}%
\end{equation}
where the lower bound is expressed similarly to that in
(\ref{berta-improvement}): as a sum of the measurement incompatibility
$-2\log_{2}c(P,Q)$ and the state-dependent term $S(A|B)$. The
quantities\ $S(P|B)\equiv S(\rho_{PB})-S(\rho_{B})$ and $S(Q|B)\equiv
S(\rho_{QB})-S(\rho_{B})$ are the conditional von Neumann entropies of the
post measurement states on the measurement register and the memory:
\begin{align}
\rho_{PB}  &  =\sum_{p}\left\vert p\right\rangle \left\langle p\right\vert
_{P}\otimes\text{Tr}_{A}\left\{  (\Lambda_{p}\otimes I)\rho_{AB}\right\}
,\label{eq:p-post-measure-state}\\
\rho_{QB}  &  =\sum_{q}\left\vert q\right\rangle \left\langle q\right\vert
_{Q}\otimes\text{Tr}_{A}\{(\Gamma_{q}\otimes I)\rho_{AB}\},
\label{eq:q-post-measure-state}%
\end{align}
respectively resulting after the POVMs $P$ and $Q$ are performed on the system
$A$, with additional information stored in the memory $B$. The entropies
$S(P|B)$ and $S(Q|B)$ quantify the uncertainty about the outcome of a
measurement on the system from the perspective of a party holding the quantum
memory $B$. Here, $S(A|B)\equiv S(\rho_{AB})-S(\rho_{B})$ is the conditional
von Neumann entropy between $A$ and $B$. The additional term $S(A|B)$ on the
right hand side can become negative when the system $A$ is entangled with its
quantum memory $B$, suggesting that it should be possible to reduce the sum of
the conditional uncertainties down to zero. Indeed, suppose that the
measurements $P$ and $Q$ are two canonical-conjugate observables and that the
system and memory are in a maximally entangled state. In this case, the sum
$S(P|B)+S(Q|B)$ is equal to zero, meaning that the possessor of the quantum
memory can predict the outcome of either measurement on the system simply by
measuring his quantum memory. The RHS\ of the above inequality is consistent
with this---when the system and memory are in a maximally entangled state, the
conditional entropy $S(A|B)$ assumes its most negative value so that
$S(A|B)=-\log_{2}\,d$ (where $d$ denotes the dimension of the system) and as
$-2\log_{2}c(P,Q)$ cannot exceed the value $\log_{2}d$~\cite{Wehner}, the
RHS of (\ref{bertabound}) reduces to zero. This EUR\ has been
recently experimentally tested \cite{exptl1,exptl2}.

One of the major goals in quantum information theory is to understand and
exploit various resources like entanglement, quantum correlations
\cite{zurek}, and classical correlations \cite{ved} in a composite quantum
state. Entanglement has been used extensively in many quantum information
processing tasks. However, the role of quantum and classical correlations is
beginning to be understood and is a subject of extensive research in recent
years \cite{kavan}. Our objective here is to understand how other correlation
measures such as quantum discord \cite{zurek}\ and classical correlation
\cite{ved} can play a role in tightening the EUR\ in (\ref{bertabound}).
%We find several applications of our new inequality.
This leads to two new EURs, and we will show that the first inequality
provides a new lower bound on the regularized entanglement of 
formation for any density operator shared between Alice and Bob. Also, 
we show that it can be used to
give an upper bound on the distillable common randomness
\cite{devetak,kw} $C_{D}^{\rightarrow}(\rho_{CB})$ with a third-party system.
The second inequality can be used to provide an improved bound 
for the secure key rate in a quantum key distribution protocol, in the case that
Alice and Bob have some description of the state that they share.
% and new bounds on the distillable common randomness 
%and the entanglement of formation of bipartite quantum states.

In this paper, we prove the following new entropic uncertainty relation:

\begin{theorem}
\label{main-theorem}The uncertainties $S(P|B)$ and $S(Q|B)$ are lower bounded
by the measurement incompatibility $-2\log_{2}c(P,Q)$, the conditional entropy
$S(A|B)$ of the state $\rho_{AB}$, and the larger of zero and the difference
between the state's quantum discord $D_{A}\left(  \rho_{AB}\right)  $ and its
classical correlation $J_{A}(\rho_{AB})$:
\begin{multline}
S(P|B)+S(Q|B)\geq-2\log_{2}c(P,Q)+S\left(  A|B\right) \label{arunbound}\\
+\max\left\{  0,D_{A}\left(  \rho_{AB}\right)  -J_{A}(\rho_{AB})\right\}  .
\end{multline}

\end{theorem}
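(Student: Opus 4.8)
The plan is to reduce the desired inequality to the unconditional, state-dependent bound (\ref{berta-improvement}) applied to the reduced state $\rho_A$, using the definition of the classical correlation $J_A(\rho_{AB})$ to control the gap between the conditional entropies $S(P|B)$, $S(Q|B)$ and the unconditional entropies $H(P)$, $H(Q)$. Throughout I would use that the classical correlation is $J_A(\rho_{AB})=\max_{\{M_k\}}[S(\rho_B)-\sum_k p_k S(\rho_{B|k})]$, the maximum running over measurements $\{M_k\}$ on $A$, with $p_k$ the outcome probabilities and $\rho_{B|k}$ the corresponding conditional states of $B$, and that the discord is $D_A(\rho_{AB})=I(A:B)-J_A(\rho_{AB})$ with $I(A:B)=S(A)-S(A|B)$.

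The key step is to rewrite each conditional entropy. Since $\rho_{PB}$ in (\ref{eq:p-post-measure-state}) is a classical-quantum state, its entropy splits as $S(\rho_{PB})=H(P)+\sum_p p_p\,S(\rho_{B|p})$, so that $H(P)-S(P|B)=S(\rho_B)-\sum_p p_p\,S(\rho_{B|p})$, which is exactly the Holevo-type mutual information between the outcome of $P$ and the memory $B$. Because $P$ is itself an admissible measurement on $A$, this quantity is at most the maximal value $J_A(\rho_{AB})$, and the same holds for $Q$. Hence $S(P|B)\geq H(P)-J_A$ and $S(Q|B)\geq H(Q)-J_A$, which sum to
\begin{equation}
S(P|B)+S(Q|B)\geq H(P)+H(Q)-2J_A(\rho_{AB}).
\end{equation}

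Next I would apply the unconditional bound (\ref{berta-improvement}) to the system state $\rho_A$, whose measurement statistics $\mathrm{Tr}\{\Lambda_p\rho_A\}$ and $\mathrm{Tr}\{\Gamma_q\rho_A\}$ are precisely the distributions defining $H(P)$ and $H(Q)$, giving $H(P)+H(Q)\geq-2\log_2 c(P,Q)+S(A)$. Substituting and using the algebraic identity $S(A)-2J_A=S(A|B)+D_A-J_A$, which follows from $D_A=I(A:B)-J_A$ together with $I(A:B)=S(A)-S(A|B)$, yields
\begin{equation}
S(P|B)+S(Q|B)\geq-2\log_2 c(P,Q)+S(A|B)+D_A(\rho_{AB})-J_A(\rho_{AB}).
\end{equation}
Finally, since the Berta bound (\ref{bertabound}) already supplies $S(P|B)+S(Q|B)\geq-2\log_2 c(P,Q)+S(A|B)$, taking the larger of the two right-hand sides produces the $\max\{0,D_A-J_A\}$ term and completes the argument.

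The main obstacle is the step bounding the measured mutual information by $J_A$: it requires that the POVMs $P$ and $Q$ belong to the class of measurements over which $J_A(\rho_{AB})$ is maximized, so I would need to fix the definition of the classical correlation so that arbitrary POVMs on $A$ are allowed (or else restrict attention to that class). Everything else is a short computation---the entropy decomposition of a classical-quantum state, an appeal to (\ref{berta-improvement}), and the cancellation that collapses $S(A|B)+D_A-J_A$ to $S(A)-2J_A$---so the conceptual content lives entirely in identifying $H(P)-S(P|B)$ with a measured mutual information and bounding it by the classical correlation.
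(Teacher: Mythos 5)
Your proposal is correct and follows essentially the same route as the paper's proof: bound $I(P;B)$ and $I(Q;B)$ by $J_{A}(\rho_{AB})$ (the paper's definition already optimizes over all POVMs on $A$, so the caveat you raise is automatically satisfied), apply the state-dependent unconditional bound (\ref{berta-improvement}) to $H(P)+H(Q)$, and use $S(A)-2J_{A}=S(A|B)+D_{A}-J_{A}$ before taking the maximum with the Berta \textit{et al}.~bound. The only cosmetic difference is that you derive $H(P)-S(P|B)=I(P;B)$ explicitly via the classical-quantum entropy decomposition, whereas the paper invokes it as a standard identity.
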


The classical correlation $J_{A}(\rho_{AB})$ is defined as \cite{ved}%
\[
J_{A}(\rho_{AB})\equiv\max_{\left\{  \Upsilon_{x}\right\}  }I\left(
X;B\right)  ,
\]
where the mutual information $I\left(  X;B\right)  \equiv S\left(  X\right)
+S\left(  B\right)  -S\left(  XB\right)  $ is with respect to the
post-measurement state%
\[
\rho_{XB}=\sum_{x}\left\vert x\right\rangle \left\langle x\right\vert
_{X}\otimes\text{Tr}_{A}\left\{  \left(  \Upsilon_{x}\otimes I\right)
\rho_{AB}\right\}  ,
\]
and the optimization is over all POVMs $\left\{  \Upsilon_{x}\right\}  $
acting on the system $A$---note that it suffices to consider rank-one POVMs in
this optimization \cite{kavan}, because for every POVM\ that is not rank-one,
one can construct from it a rank-one POVM\ that gives a higher classical
correlation. If a state has no quantum correlations, the classical correlation
$J_{A}(\rho_{AB})$ is equal to the mutual information $I\left(  A;B\right)
\equiv S\left(  \rho_{A}\right)  +S\left(  \rho_{B}\right)  -S\left(
\rho_{AB}\right)  $ itself~\cite{ved}. The notion of quantum discord and
classical correlation between two systems plays an important role in the study of
quantum correlations. If one defines the total correlation using mutual
information, then the difference between the total and classical correlations
gives the so-called quantum discord~\cite{zurek}:%
\begin{equation}
D_{A}\left(  \rho_{AB}\right)  \equiv I\left(  A;B\right)  -J_{A}(\rho_{AB}),
\label{def:discord}%
\end{equation}
and it is a measure of quantum correlations in the state~$\rho_{AB}$.

We now prove the uncertainty relation as stated in (\ref{arunbound}).

\begin{proof}
[Theorem~\ref{main-theorem}] First consider that it suffices to prove the
following inequality:%
\begin{multline}
S(P|B)+S(Q|B)\geq-2\log_{2}c(P,Q)+S\left(  A|B\right) \label{eq:new-ineq}\\
+D_{A}\left(  \rho_{AB}\right)  -J_{A}(\rho_{AB}),
\end{multline}
because combining it with the bound in (\ref{bertabound}) leads to the bound
in (\ref{arunbound}). So, consider a bipartite density operator $\rho_{AB}$.
If Alice performs the POVM $P$, then the post-measurement state is as in
(\ref{eq:p-post-measure-state}), and similarly it is as in
(\ref{eq:q-post-measure-state}) if she chooses to perform the POVM\ $Q$. We
then have%
\begin{align}
&  S\left(  P|B\right)  +S\left(  Q|B\right) \nonumber\\
&  =H\left(  P\right)  -I\left(  P;B\right)  +H\left(  Q\right)  -I\left(
Q;B\right) \nonumber\\
&  \geq H\left(  P\right)  +H\left(  Q\right)  -2J_{A}\left(  \rho_{AB}\right)
\nonumber\\
&  \geq-2\log_{2}c\left(  P,Q\right)  +S\left(  A\right)  -2J_{A}\left(
\rho_{AB}\right) \nonumber\\
&  =-2\log_{2}c\left(  P,Q\right)  +S\left(  A|B\right)  +D_{A}\left(
\rho_{AB}\right)  -J_{A}(\rho_{AB}). \label{proof-of-new-ineq}%
\end{align}
The first equality is an identity for the quantum mutual information. The
first inequality follows from the definition of the classical correlation
$J_{A}\left(  \rho_{AB}\right)  $ and the fact that the POVMs $P$ or $Q$ may
not necessarily be the maximizing POVM\ for $J_{A}\left(  \rho_{AB}\right)  $,
so that $I\left(  P;B\right)  \leq J_{A}\left(  \rho_{AB}\right)  $ and
$I\left(  Q;B\right)  \leq J_{A}\left(  \rho_{AB}\right)  $. In the second
inequality, we apply the uncertainty relation in (\ref{berta-improvement}).
The second equality exploits the identity $S\left(  A\right)  =S\left(
A|B\right)  +I\left(  A;B\right)  $ and the
definition of the quantum discord in (\ref{def:discord}).
\end{proof}

Our new lower bound in (\ref{arunbound}) tightens the state-dependent term of
the bound in (\ref{bertabound}) if the discord $D_{A}\left(  \rho_{AB}\right)
$\ is larger than the classical correlation $J_{A}\left(  \rho_{AB}\right)  $.
This occurs for several natural examples of bipartite states, including Werner
states and isotropic states (see the appendices).

We discuss some examples to illustrate (\ref{arunbound}). Our first two
examples recover the result of Berta \textit{et al}. When the system is
uncorrelated with the memory, $\rho_{AB}=\rho_{A}\otimes\rho_{B}$, we have
$S(P|B)=H(P)$, $S(Q|B)=H(Q)$, $S(A|B)=S(A)$, and
 $D_{A}(\rho_{AB})=J_{A}(\rho_{AB})=0$,
resulting in the relation in (\ref{berta-improvement}). As a second example,
we consider when the system and memory are in a pure maximally entangled
state. In this case, $J_{A}(\rho_{AB})=S(\rho_{B})=\log_{2}d$, $I\left(
A;B\right)  =2\log_{2}d$, $D_{A}\left(  \rho_{AB}\right)  =\log_{2}d$,
$-2\,\log_{2}c(P,Q)\leq\log_{2}d$ and hence the right hand side of
(\ref{arunbound}) reduces to zero---as in the Berta \textit{et al}%
.~inequality---which is consistent with the fact that a quantum memory which
is maximally entangled with the system assists in predicting the measurement
outcomes of non-commuting observables precisely. In fact, our bound reduces
to the bound of Berta {\it et al}.~for all pure bipartite states, since the discord is equal to the
classical correlation for these states. This reduction also occurs
whenever the state is ``classical-quantum'' with $A$ classical and $B$ quantum.
The discord is equal to zero for these states \cite{kavan}, so that it is always
less than the classical correlation.

Our next example illustrates an important difference between our new
inequality in (\ref{arunbound}) and the Berta \textit{et al}.~inequality in
(\ref{bertabound}). We consider a two-qubit Werner state:%
\begin{equation}
\rho_{AB}=\frac{1-p}{4}I_{A}\otimes I_{B}+p\,|\Phi^{-}\rangle_{AB}\langle
\Phi^{-}|, \label{wer}%
\end{equation}
where $|\Phi^{-}\rangle_{AB}=(|0_{A},1_{B}\rangle-|1_{A},0_{B}\rangle
)/\sqrt{2}$ is the Bell state and $0\leq p\leq1$. Measurements of $\sigma_{x}$
and $\sigma_{z}$ lead to the conditioned entropies $S(\sigma_{x}%
|B)=S(\sigma_{z}|B)=h\left(  \left(  1-p\right)  /2\right)  ,$ where $h(x)$ is the binary entropy, so that%
\[
S(\sigma_{x}|B)+S(\sigma_{z}|B)=2h\left(  \left(  1-p\right)  /2\right)  .
\]
The incompatibility $c(\sigma_{x},\sigma_{z})=1/\sqrt{2}$, the classical
correlation in the state $J_{A}(\rho_{AB})=1-h\left(  \left(  1-p\right)
/2\right)  $, the entropy $S\left(  A\right)  =1$, and the conditional entropy
$S(A|B)=-\frac{3(1-p)}{4}\log_{2}\,\frac{(1-p)}{4}-\frac{1+3p}{4}\log
_{2}\,\frac{1+3p}{4}$. Recall from (\ref{proof-of-new-ineq}) that our lower
bound is equivalent to $-2\log_{2}c\left(  P,Q\right)  +S\left(  A\right)
-2J_{A}\left(  \rho_{AB}\right)  $, which for this example becomes
$1+1-2\left[  1-h\left(  \left(  1-p\right)  /2\right)  \right]  =2h\left(
\left(  1-p\right)  /2\right)  $. Thus the entropy sum $S(\sigma
_{x}|B)+S(\sigma_{z}|B)$ \textit{coincides with our lower bound} for all
values of $p$, a significant tightening of the Berta \textit{et al}%
.~inequality. In Fig.~1 we have plotted the right hand
sides of the inequalities in (\ref{bertabound}) and (\ref{arunbound}) to compare
the bounds.

\begin{figure}[ptb]
\includegraphics*[width=3.5in,keepaspectratio]{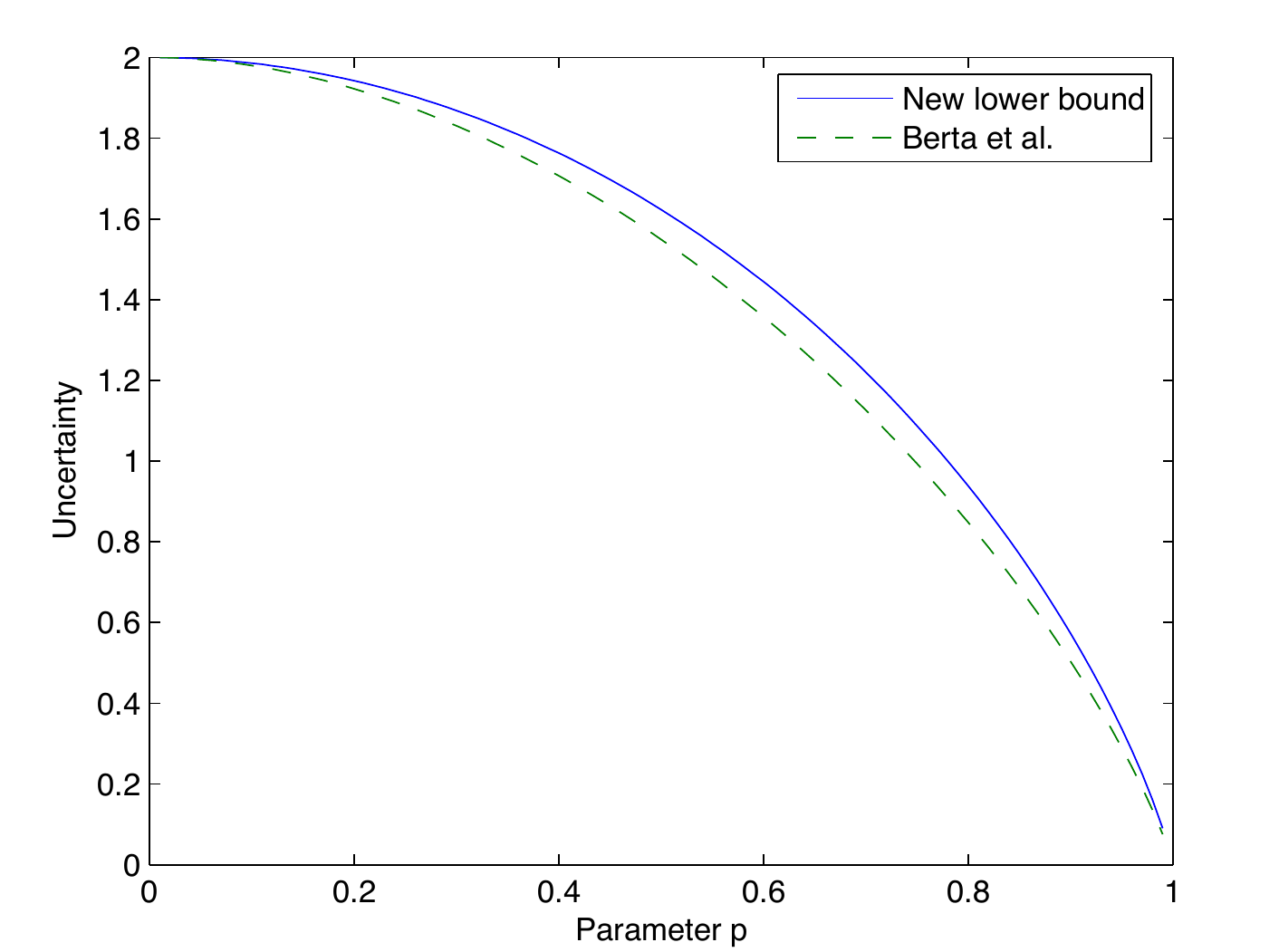}%
\caption{The right hand side of our new inequality (\ref{arunbound}) $-2
\log_{2}\, c(\sigma_{x},\sigma_{z}) + S(A|B) + \mathrm{max}\{0,D_{A}(\rho
_{AB})-\, J_{A}(\rho_{AB})\}$ (solid blue), and the right hand side of the
Berta \textit{et al}.~inequality (\ref{bertabound}) $-2\log_{2}\, c(\sigma
_{x},\sigma_{z})+S(A|B)$ (dashed green), as a function of the noise parameter
$p$, when the system and memory are prepared in the two qubit Werner state
(\ref{wer}).}%
\end{figure}

Interestingly, what occurs in the above Werner state example extends to
higher-dimensional Werner and isotropic states. When the measurement is of
Fourier-conjugate observables, the uncertainty sum $S(P|B)+S(Q|B)$ coincides
with our lower bound in (\ref{arunbound}) for all dimensions and parameters of
these states, demonstrating that our lower bound is perfectly tight for these
states and measurements. Hence, these states may be called {\it minimum 
entropic uncertain states} of the EUR in (\ref{arunbound}). We refer the reader to the appendices for 
details of this result, where we exploit recent findings of Chitambar 
\cite{chitambar12}. Note that minimum uncertainty states of the Berta
{\it et al}.~bound were studied in Refs.~\cite{PhysRevLett.103.020402,Coles}.

Berta \textit{et al}.~proved the following EUR for a tripartite state
$\rho_{ABE}$ \cite{Berta}:
$
S(P|B)+S(Q|E)\geq-2\log_{2}c(P,Q),
$
which has implications for the security of quantum key distribution (QKD). We
can further improve the lower bound to be as follows:
\begin{multline}
S(P|B)+S(Q|E)\geq-2\log_{2}c(P,Q)\label{eq:crypto-ineq}\\
+\max\left\{  0,D_{A}(\rho_{A|\left(  BE^{\prime}\right)  })-J_{A}\left(
\rho_{AB}\right)  \right\}  ,
\end{multline}
where the discord $D_{A}(\rho_{A|\left(  BE^{\prime}\right)  })$ is between
the $A$ and $BE^{\prime}$ systems of a purification $\left\vert \psi
\right\rangle _{ABEE^{\prime}}$\ of the state $\rho_{ABE}$ (note that
$D_{A}(\rho_{A|\left(  BE^{\prime}\right)  })=D_{A}(\rho_{AB})$ if the state
$\rho_{ABE}$ is already pure). The above EUR implies tighter security bounds
on the ability of an eavesdropper $E$ to predict the outcome of the $Q$
measurement on the system $A$, again whenever the discord~$D_{A}%
(\rho_{A|\left(  BE^{\prime}\right)  })$ is larger than the classical
correlation~$J_{A}(\rho_{AB})$ and in the case where Alice and Bob have some
description of the state that they share. It might not always be possible to
have such a description of the state, but in the case that they do, the above inequality
may lead to improved bounds on the secure key rate.

The inequality in (\ref{eq:crypto-ineq}) follows by considering a purification
$\left\vert \phi\right\rangle _{ABEE^{\prime}}$ of the state $\rho_{ABE}$ and
the following chain of inequalities that follow from reasoning similar to that
in (\ref{proof-of-new-ineq}):%
\begin{align}
&  S\left(  P|B\right)  +S\left(  Q|E\right) \label{eq:crypto-EUR-proof}\\
&  =H\left(  P\right)  -I\left(  P;B\right)  +H\left(  Q\right)  -I\left(
Q;E\right) \nonumber\\
&  \geq H\left(  P\right)  +H\left(  Q\right)  -J_{A}\left(  \rho_{AB}\right)
-J_{A}\left(  \rho_{AE}\right) \nonumber\\
&  \geq-2\log_{2}\left(  c\right)  +S\left(  A\right)  -J_{A}\left(  \rho
_{AB}\right)  -J_{A}\left(  \rho_{AE}\right) \nonumber
\end{align}
We now focus on rewriting the term $S\left(  A\right)  -J_{A}\left(  \rho
_{AE}\right)  $:%
\begin{align*}
S\left(  A\right)  -J_{A}\left(  \rho_{AE}\right)   &  =S\left(  A\right)
-S\left(  E\right)  +\min_{\left\{  \Lambda_{a}\right\}  }S\left(  E|\left\{
\Lambda_{a}\right\}  \right) \\
&  =S\left(  A\right)  -S\left(  ABE^{\prime}\right)  +\min_{\left\{
\Lambda_{a}\right\}  }S\left(  BE^{\prime}|\left\{  \Lambda_{a}\right\}
\right) \\
&  =S\left(  A\right)  +S\left(  BE^{\prime}\right)  -S\left(  ABE^{\prime
}\right) \\
&  \ \ \ \ \ -\left[  S\left(  BE^{\prime}\right)  -\min_{\left\{  \Lambda
_{a}\right\}  }S\left(  BE^{\prime}|\left\{  \Lambda_{a}\right\}  \right)
\right] \\
&  =I\left(  A;BE^{\prime}\right)  -J_{A}(\rho_{A|\left(  BE^{\prime}\right)
})\\
&  =D_{A}(\rho_{A|\left(  BE^{\prime}\right)  })
\end{align*}
The first equality follows by definition. The second equality follows by
noting that the state on systems $ABEE^{\prime}$ is pure, so that $S\left(
E\right)  =S\left(  ABE^{\prime}\right)  $. Also, the optimal POVM\ performed
on $A$ is a rank-one POVM, so that the conditional state on $EBE^{\prime}$ is
pure, implying that $\min_{\left\{  \Lambda_{a}\right\}  }S\left(  E|\left\{
\Lambda_{a}\right\}  \right)  =\min_{\left\{  \Lambda_{a}\right\}  }S\left(
BE^{\prime}|\left\{  \Lambda_{a}\right\}  \right)  $. The other equalities
follow from definitions. Substituting back in (\ref{eq:crypto-EUR-proof}), we
obtain%
\[
S\left(  P|B\right)  +S\left(  Q|E\right)  \geq-2\log_{2}\left(  c\right)
+D_{A}(\rho_{A|\left(  BE^{\prime}\right)  })-J_{A}\left(  \rho_{AB}\right)
\]
(Again, it suffices to prove the inequality above because combining it with
the bound $S\left(  P|B\right)  +S\left(  Q|E\right)  \geq-2\log_{2}c\left(
P,Q\right)  $ of Berta \textit{et al}.~leads to the new inequality in
(\ref{eq:crypto-ineq}).)

As a second application of our inequality, we show that it helps to give a
lower bound on the entanglement of formation. Suppose that Alice and Bob share
the state $\rho_{AB}$. Recall that the entanglement of formation $E_{f}\left(
\rho_{AB}\right)  $ and its regularization $E_{f}^{\infty}\left(  \rho
_{AB}\right)  $ are as follows \cite{kavan}:%
\begin{align*}
E_{f}\left(  \rho_{AB}\right)   &  \equiv\inf\sum_{x}p\left(  x\right)
S\left(  \phi_{A}^{x}\right)  ,\\
E_{f}^{\infty}\left(  \rho_{AB}\right)   &  \equiv\lim_{k\rightarrow\infty
}\frac{1}{k}E_{f}(\left(  \rho_{AB}\right)  ^{\otimes k}),
\end{align*}
where the infimum is over all ensembles $\left\{  p\left(  x\right)
,\left\vert \phi_{x}\right\rangle _{AB}\right\}  $ such that $\rho_{AB}%
=\sum_{x}p\left(  x\right)  \left\vert \phi_{x}\right\rangle \left\langle
\phi_{x}\right\vert _{AB}$ and $\phi_{A}^{x} =
\text{Tr}_B\{ \left\vert \phi_{x}\right\rangle  \left\langle \phi_{x}\right\vert _{AB}\}$. A recent result of Carlen and Lieb states that
$E_{f}\left(  \rho_{AB}\right)  \geq-S\left(  A|B\right)  $ \cite{CL12}, from
which it easily follows that $E_{f}^{\infty}\left(  \rho_{AB}\right)
\geq-S\left(  A|B\right)  $, by exploiting the fact that entropies are
additive for tensor-power states. Alice and Bob could each measure the
observable $P$ on their shares of $\rho_{AB}$, or they could each measure the
observable $Q$. Let $p_{e}^{P}$ be the probability that the outcomes of $P$ on
Alice and Bob are different, and let $p_{e}^{Q}$ be the probability that the
outcomes of $Q$ on Alice and Bob are different. Using the Fano inequality, we
have $S(P|B)+S(Q|B)\leq b_{F}$, where 
$b_{F}\equiv h(p_{e}^{P})+p_{e}^{P}\log(d-1)+h(p_{e}^{Q})+p_{e}^{Q}\log(d-1)$.
From the Carlen-Lieb inequality and our inequality in
Theorem~\ref{main-theorem}, we obtain a non-trivial lower bound on the
regularized entanglement of formation:%
\begin{multline}
E_{f}^{\infty}\left(  \rho_{AB}\right)  \geq-2\log_{2}%
c(P,Q)\label{eq:EF-lower-bnd}\\
+\max\left\{  0,D_{A}\left(  \rho_{AB}\right)  -J_{A}(\rho_{AB})\right\}
-b_{F}.
\end{multline}
The above inequality shows that from the measurement incompatibility, the two
error probabilities, and the discord and classical correlations, one can
estimate a lower bound on the regularized entanglement of formation
$E_{f}^{\infty}(\rho_{AB})$. As a simple example, it is clear that the bound
is tight for a Schmidt rank $d$\ maximally entangled state for which
$E_{f}^{\infty}=\log d$, $\max\left\{  0,D_{A}\left(  \rho_{AB}\right)
-J_{A}(\rho_{AB})\right\}  =0$, and by choosing the measurements to be Fourier
conjugate so that $-2\log_{2}c(P,Q)=\log d$ and $b_{F}=0$.

As a further application, we show that our EUR gives
an upper bound on the regularized common randomness distillable by means of
one-way classical communication from Charlie to Bob. In the study of general
resource conversion problems, the notion of distilling common randomness from
a given quantum state plays an important role. As shown in \cite{devetak}, the
distillable common randomness is the regularized version of the classical
correlation. For example, if Charlie and Bob share an arbitrarily large number of copies of $\rho_{CB}$,
then the net amount of correlated classical bits that they can share is given
by $C_{D}^{\rightarrow}(\rho_{CB})=\mathrm{lim}_{k\rightarrow\infty}\frac
{1}{k}J(\left(  \rho_{CB}\right)  ^{\otimes k})$. Using the Koashi-Winter
\cite{kw} equality $C_{D}^{\rightarrow}(\rho_{CB})+E_{f}^{\infty}(\rho
_{AB})=S(\rho_{B})$ for some state $\rho_{AB}$ such that $\rho_{ABC}$ purifies
both $\rho_{AB}$ and $\rho_{CB}$, we can substitute into
(\ref{eq:EF-lower-bnd}) in order to obtain the following upper bound on the
distillable common randomness of $\rho_{CB}$:%
\begin{multline}
C_{D}^{\rightarrow}(\rho_{CB})\leq S(\rho_{B})+2\log_{2}c(P,Q)\\
-\max\left\{  0,D_{A}\left(  \rho_{AB}\right)  -J_{A}(\rho_{AB})\right\}
+b_{F}.
\end{multline}
This is an interesting application of our inequality and the Koashi-Winter
relation, giving us an upper bound on the distillable common randomness across
another partition from the local entropy on $B$, the measurement incompatibility,
the discord and classical correlation on $AB$, and the error probabilities on
$A$ and $B$.

In conclusion, we have proved a new uncertainty relation for conditional
entropic measures which depends on the incompatibility of two quantum
measurements, the conditional entropy, the quantum discord, and the classical
correlations of a state shared between the observed system and
 a quantum memory. This new uncertainty relation tightens that of Berta
\textit{et al}.~whenever the quantum discord of the bipartite state is larger
than its classical correlation. This occurs for several natural examples of
bipartite states including all Werner and isotropic states, and it would be interesting
to characterize the full class of states for which this tightening occurs. Furthermore, using
our new inequality we have given a non-trivial lower bound on the regularized
entanglement of formation for a state shared between Alice and Bob. We have
also shown that our inequality can give an upper bound on the distillable
common randomness of quantum state that Bob shares with a third party.
Our results should have several applications in quantum information
theory, quantum communication, quantum cryptography and precision measurements.

We thank the anonymous referee for helpful comments on our paper and Kavan Modi
for useful discussions.

\appendix

\section{Werner States}

In this appendix, we prove that our lower bound in (8) of the main text is
perfectly tight for the class of higher-dimensional Werner states and
Fourier-conjugate measurements. A higher-dimensional Werner state has the
form:%
\[
\sigma_{AB}=\frac{2\left(  1-\lambda\right)  }{d\left(  d+1\right)  }\Pi
^{+}+\frac{2\lambda}{d\left(  d-1\right)  }\Pi^{-},
\]
where $\Pi^{+}$ is the projector onto the symmetric subspace and $\Pi^{-}$ is
the projector onto the antisymmetric subspace.

\begin{theorem}
For Werner states, the uncertainty sum $S\left(  P|B\right)  +S\left(
Q|B\right)  $ and our lower bound $-2\log c\left(  P,Q\right)  +S\left(
A|B\right)  +D_{A}\left(  \rho_{AB}\right)  -J_{A}\left(  \rho_{AB}\right)
$\ in (8) of the main text coincide, and they are equal to%
\begin{multline*}
-\frac{4\left(  1-\lambda\right)  }{d+1}\log\left(  \frac{2\left(
1-\lambda\right)  }{d+1}\right)  \\
-\frac{2\left(  d-1+2\lambda\right)  }{d+1}\log\left(  \frac{d-1+2\lambda
}{d^{2}-1}\right)  .
\end{multline*}
For these states, the lower bound $-2\log c\left(  P,Q\right)  +S\left(
A|B\right)  $ of Berta \textit{et al}.~is equal to%
\[
-\lambda\log\frac{2\lambda}{d\left(  d-1\right)  }-\left(  1-\lambda\right)
\log\left(  \frac{2\left(  1-\lambda\right)  }{d\left(  d+1\right)  }\right)
.
\]

\end{theorem}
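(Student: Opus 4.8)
The plan is to show that for Werner states measured in Fourier-conjugate bases, both equality conditions underlying the chain of inequalities in (\ref{proof-of-new-ineq}) are met, so that the uncertainty sum coincides with the lower bound $-2\log c(P,Q)+S(A)-2J_A(\sigma_{AB})$. Tightness there requires two things: (i) saturation of the improved Maassen--Uffink bound (\ref{berta-improvement}), i.e.\ $H(P)+H(Q)=-2\log c(P,Q)+S(A)$, and (ii) that each measurement achieves the classical correlation, i.e.\ $I(P;B)=I(Q;B)=J_A(\sigma_{AB})$. First I would record that $\sigma_{AB}$ is invariant under $U\otimes U$, so its marginals are maximally mixed, $\sigma_A=\sigma_B=I/d$ and $S(A)=S(B)=\log d$.

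To compute the post-measurement states and the classical correlation, I would rewrite the state using $\Pi^{\pm}=(I\pm F)/2$, where $F$ is the swap operator, so that $\sigma_{AB}=\alpha I+\beta F$ with $\alpha=\frac{1-\lambda}{d(d+1)}+\frac{\lambda}{d(d-1)}$ and $\beta=\frac{1-\lambda}{d(d+1)}-\frac{\lambda}{d(d-1)}$. Measuring $A$ in any orthonormal basis $\{|k\rangle\}$ then gives outcome probabilities $p_k=1/d$ and, using $\mathrm{Tr}_A\{(|k\rangle\langle k|\otimes I)F\}=|k\rangle\langle k|$, conditional states $\rho_B^k=d\alpha I_B+d\beta|k\rangle\langle k|_B$. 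These have spectrum $\mu_1=\frac{2(1-\lambda)}{d+1}$ (once) and $\mu_2=\frac{d-1+2\lambda}{d^2-1}$ ($d-1$ times), independent of $k$ and of the basis, whence $I(P;B)=\log d-S(\rho_B^k)$.

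The crux, and the step I expect to be the main obstacle, is condition (ii): that a rank-one projective measurement is optimal for $J_A(\sigma_{AB})$. Here I would invoke Chitambar's result \cite{chitambar12} that von Neumann measurements attain the classical correlation for Werner states; combined with the $U\otimes U$ symmetry this shows that every orthonormal-basis measurement, in particular either Fourier-conjugate basis, achieves the optimum, so $I(P;B)=I(Q;B)=J_A(\sigma_{AB})=\log d-S(\rho_B^k)$. Condition (i) is comparatively routine: measuring the maximally mixed $\sigma_A$ in any basis gives a uniform distribution, so $H(P)=H(Q)=\log d$, while Fourier-conjugacy forces $c(P,Q)=1/\sqrt{d}$ and hence $-2\log c(P,Q)=\log d=S(A)$; thus $H(P)+H(Q)=2\log d=-2\log c(P,Q)+S(A)$. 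Fourier-conjugacy is precisely what makes the incompatibility term match, since for a generic pair of bases $c(P,Q)>1/\sqrt{d}$ and the bound would fall strictly short.

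With both conditions established, $S(P|B)+S(Q|B)=H(P)+H(Q)-I(P;B)-I(Q;B)=2\log d-2J_A(\sigma_{AB})=2S(\rho_B^k)$, which coincides with the lower bound and, upon substituting $\mu_1,\mu_2$ and simplifying with $(d-1)/(d^2-1)=1/(d+1)$, yields the stated expression $-2\mu_1\log\mu_1-2(d-1)\mu_2\log\mu_2$. Finally, for the Berta \textit{et al}.~bound I would use $S(B)=\log d$ to get $-2\log c(P,Q)+S(A|B)=\log d+S(\sigma_{AB})-\log d=S(\sigma_{AB})$, and then read $S(\sigma_{AB})$ directly off the Werner eigenvalues $\frac{2(1-\lambda)}{d(d+1)}$ (multiplicity $d(d+1)/2$) and $\frac{2\lambda}{d(d-1)}$ (multiplicity $d(d-1)/2$), giving the claimed $-\lambda\log\frac{2\lambda}{d(d-1)}-(1-\lambda)\log\frac{2(1-\lambda)}{d(d+1)}$.
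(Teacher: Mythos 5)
Your proposal is correct, and it reorganizes the argument in a way that differs meaningfully from the paper's proof. The paper computes the two sides independently: it quotes Chitambar's closed-form expression for $J_{A}\left(\sigma_{AB}\right)$, computes $S\left(Z|B\right)+S\left(X|B\right)$ by explicitly dephasing the Werner state with the swap-operator identity (exactly your computation of $\rho_{B}^{k}=d\alpha I+d\beta\left\vert k\right\rangle\left\langle k\right\vert$ with spectrum $\mu_{1},\mu_{2}$), and then verifies by a final algebraic manipulation, using $2=\frac{4\left(1-\lambda\right)}{d+1}+\frac{2\left(d-1+2\lambda\right)}{d+1}$, that the two expressions agree. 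You instead show that both inequalities in the chain (\ref{proof-of-new-ineq}) are saturated: the Maassen--Uffink-type bound (\ref{berta-improvement}) because the marginal is maximally mixed and the bases are mutually unbiased, and $I\left(P;B\right)=I\left(Q;B\right)=J_{A}\left(\sigma_{AB}\right)$ because the optimum is attained by a rank-one projective measurement and $U\otimes U$ covariance makes the value basis-independent. Your route makes the coincidence transparent rather than an algebraic accident, and it dispenses with the closing identity check. The one place where you lean on slightly more than the paper does is the claim that von Neumann measurements attain $J_{A}$ for Werner states: the paper only needs Chitambar's numerical formula for $J_{A}$, whereas you need the accompanying structural statement about which measurements achieve it. That statement is indeed part of Chitambar's analysis, so the reliance is legitimate, but if one only had the formula in hand, one would have to fall back on the paper's explicit comparison (equivalently, verify $\log d-S\left(\rho_{B}^{k}\right)=J_{A}\left(\sigma_{AB}\right)$ against the formula, which reduces to the same algebra). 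Your treatment of the Berta \textit{et al}.\ bound, $-2\log c+S\left(A|B\right)=S\left(\sigma_{AB}\right)$ read off from the symmetric/antisymmetric eigenvalues and multiplicities, matches the paper's.
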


\begin{proof}
For these states, the entropies $S\left(  A\right)  $, $S\left(  B\right)  $,
and $S\left(  A|B\right)  $ are as follows:%
\begin{align*}
S\left(  A\right)   &  =S\left(  B\right)  =\log d,\\
S\left(  A|B\right)   &  =-\lambda\log\frac{2\lambda}{d\left(  d-1\right)  }\\
&  \ \ \ \ \ \ -\left(  1-\lambda\right)  \log\left(  \frac{2\left(
1-\lambda\right)  }{d\left(  d+1\right)  }\right)  -\log d.
\end{align*}
Chitambar proved that the classical correlation $J_{A}\left(  \sigma
_{AB}\right)  $ is equal to \cite{chitambar12}%
\begin{align*}
&  J_{A}\left(  \sigma_{AB}\right)  \\
&  =I(A;B)-D_{A}(\sigma_{AB})\\
&  =\log2d+\lambda\log\frac{\lambda}{d-1}+\left(  1-\lambda\right)
\log\left(  \frac{1-\lambda}{d+1}\right)  \\
&  \ \ \ \ \ \ -\left[  \log\left(  d+1\right)  +\lambda\log\frac{\lambda
}{d-1}+\left(  1-\lambda\right)  \log\left(  \frac{1-\lambda}{d+1}\right)
\right]  \\
&  \ \ \ \ \ \ +\frac{2\left(  1-\lambda\right)  }{d+1}\log\left(
1-\lambda\right)  +\frac{d-1+2\lambda}{d+1}\log\left(  \frac{d-1+2\lambda
}{2\left(  d-1\right)  }\right)  \\
&  =\log\left(  \frac{2d}{d+1}\right)  +\frac{2\left(  1-\lambda\right)
}{d+1}\log\left(  1-\lambda\right)  \\
&  \ \ \ \ \ +\frac{d-1+2\lambda}{d+1}\log\left(  \frac{d-1+2\lambda}{2\left(
d-1\right)  }\right)
\end{align*}
Suppose that the POVMs $P$ and $Q$ correspond to Fourier-conjugate observables
so that $P=\left\{  \left\vert z\right\rangle \left\langle z\right\vert
\right\}  $ and $Q=\left\{  \left\vert \tilde{x}\right\rangle \left\langle
\tilde{x}\right\vert \right\}  $. (In what follows, we just call them $Z=P$
and $X=Q$.) For computing $S\left(  Z|B\right)  $ we need to consider the
following state:%
\begin{equation}
\sigma_{ZB}\equiv\sum_{z}\left(  \left\vert z\right\rangle \left\langle
z\right\vert \otimes I\right)  \sigma_{AB}\left(  \left\vert z\right\rangle
\left\langle z\right\vert \otimes I\right)  .\label{eq:z-dephased-werner}%
\end{equation}
Using the facts that $\Pi^{+}=\left(  I+F\right)  /2$ and $\Pi^{-}=\left(
I-F\right)  /2$ and that%
\[
\sum_{z}\left(  \left\vert z\right\rangle \left\langle z\right\vert \otimes
I\right)  \ F\ \left(  \left\vert z\right\rangle \left\langle z\right\vert
\otimes I\right)  =\sum_{z}\left\vert z\right\rangle \left\langle z\right\vert
\otimes\left\vert z\right\rangle \left\langle z\right\vert ,
\]
it follows that (\ref{eq:z-dephased-werner}) is equal to%
\begin{align}
&  \frac{\left(  1-\lambda\right)  }{d\left(  d+1\right)  }\left(  I+\sum
_{z}\left\vert z\right\rangle \left\langle z\right\vert \otimes\left\vert
z\right\rangle \left\langle z\right\vert \right)  \nonumber\\
&  \ \ \ +\frac{\lambda}{d\left(  d-1\right)  }\left(  I-\sum_{z}\left\vert
z\right\rangle \left\langle z\right\vert \otimes\left\vert z\right\rangle
\left\langle z\right\vert \right)  \nonumber\\
&  =\left(  \frac{\left(  1-\lambda\right)  }{d\left(  d+1\right)  }%
+\frac{\lambda}{d\left(  d-1\right)  }\right)  I\nonumber\\
&  \ \ \ +\left(  \frac{\left(  1-\lambda\right)  }{d\left(  d+1\right)
}-\frac{\lambda}{d\left(  d-1\right)  }\right)  \sum_{z}\left\vert
z\right\rangle \left\langle z\right\vert \otimes\left\vert z\right\rangle
\left\langle z\right\vert \nonumber\\
&  =\frac{d\left(  d-1+2\lambda\right)  }{d^{2}-1}\frac{I}{d^{2}}\nonumber\\
&  \ \ \ +\frac{d-1-2\lambda d}{d^{2}-1}\sum_{z}\frac{1}{d}\left\vert
z\right\rangle \left\langle z\right\vert \otimes\left\vert z\right\rangle
\left\langle z\right\vert \nonumber\\
&  =\sum_{z}\frac{1}{d}\left\vert z\right\rangle \left\langle z\right\vert
\otimes\left(  \frac{d\left(  d-1+2\lambda\right)  }{d^{2}-1}\frac{I}{d}%
+\frac{d-1-2\lambda d}{d^{2}-1}\left\vert z\right\rangle \left\langle
z\right\vert \right)  \label{eq:dephased-Werner}%
\end{align}
From the above form, we deduce that $H\left(  Z\right)  =\log d$, implying
that%
\begin{align*}
S\left(  Z|B\right)   &  =S\left(  B|Z\right)  +H\left(  Z\right)  -S\left(
B\right)  \\
&  =S\left(  B|Z\right)  .
\end{align*}
Consider that each conditional state in (\ref{eq:dephased-Werner}) can be
expressed as%
\begin{align*}
&  \frac{d\left(  d-1+2\lambda\right)  }{d^{2}-1}\frac{I}{d}+\frac
{d-1-2\lambda d}{d^{2}-1}\left\vert z\right\rangle \left\langle z\right\vert
\\
&  =\frac{d-1+2\lambda}{d^{2}-1}\left[  \left(  I-\left\vert z\right\rangle
\left\langle z\right\vert \right)  +\left\vert z\right\rangle \left\langle
z\right\vert \right]  +\frac{d-1-2\lambda d}{d^{2}-1}\left\vert z\right\rangle
\left\langle z\right\vert \\
&  =\frac{d-1+2\lambda}{d^{2}-1}\left(  I-\left\vert z\right\rangle
\left\langle z\right\vert \right)  +\frac{2\left(  1-\lambda\right)  \left(
d-1\right)  }{d^{2}-1}\left\vert z\right\rangle \left\langle z\right\vert .
\end{align*}
For the above state, each probability is independent of the particular value
of $Z$, implying that the entropy $S\left(  B|Z\right)  $ is just the von
Neumann entropy of the following state:%
\[
\frac{d-1+2\lambda}{d^{2}-1}\left(  I-\left\vert z\right\rangle \left\langle
z\right\vert \right)  +\frac{2\left(  1-\lambda\right)  \left(  d-1\right)
}{d^{2}-1}\left\vert z\right\rangle \left\langle z\right\vert .
\]
Thus, we have that%
\begin{align*}
S\left(  Z|B\right)   &  =S\left(  B|Z\right)  \\
&  =-\left(  d-1\right)  \frac{d-1+2\lambda}{d^{2}-1}\log\frac{d-1+2\lambda
}{d^{2}-1}\\
&  \ \ \ -\frac{2\left(  1-\lambda\right)  \left(  d-1\right)  }{d^{2}-1}%
\log\frac{2\left(  1-\lambda\right)  \left(  d-1\right)  }{d^{2}-1}\\
&  =-\frac{d-1+2\lambda}{d+1}\log\frac{d-1+2\lambda}{d^{2}-1}\\
&  \ \ \ -\frac{2\left(  1-\lambda\right)  }{d+1}\log\frac{2\left(
1-\lambda\right)  }{d+1}.
\end{align*}
By a similar line of reasoning, it follows from the high symmetry of the
Werner state that%
\begin{multline*}
S\left(  X|B\right)  =-\frac{d-1+2\lambda}{d+1}\log\frac{d-1+2\lambda}%
{d^{2}-1}\\
-\frac{2\left(  1-\lambda\right)  }{d+1}\log\frac{2\left(  1-\lambda\right)
}{d+1}%
\end{multline*}
Putting everything together, the sum of the uncertainties is equal to%
\begin{multline}
S\left(  Z|B\right)  +S\left(  X|B\right)  =-\frac{2\left(  d-1+2\lambda
\right)  }{d+1}\log\frac{d-1+2\lambda}{d^{2}-1}%
\label{eq:werner-state-uncertainty-sum}\\
-\frac{4\left(  1-\lambda\right)  }{d+1}\log\frac{2\left(  1-\lambda\right)
}{d+1}%
\end{multline}
The lower bound of Berta \textit{et al}.~is as follows:%
\begin{multline*}
-2\log c+S\left(  A|B\right)  =-\lambda\log\frac{2\lambda}{d\left(
d-1\right)  }\\
-\left(  1-\lambda\right)  \log\left(  \frac{2\left(  1-\lambda\right)
}{d\left(  d+1\right)  }\right)  ,
\end{multline*}
while our new lower bound is equal to%
\begin{align*}
&  -2\log c+S\left(  A\right)  -2J_{A}\left(  \rho_{AB}\right)  \\
&  =2\log d-2\log\left(  \frac{2d}{d+1}\right)  -\frac{4\left(  1-\lambda
\right)  }{d+1}\log\left(  1-\lambda\right)  \\
&  \ \ \ -\frac{2\left(  d-1+2\lambda\right)  }{d+1}\log\left(  \frac
{d-1+2\lambda}{2\left(  d-1\right)  }\right)  \\
&  =2\log\left(  \frac{d+1}{2}\right)  -\frac{4\left(  1-\lambda\right)
}{d+1}\log\left(  1-\lambda\right)  \\
&  \ \ \ -\frac{2\left(  d-1+2\lambda\right)  }{d+1}\log\left(  \frac
{d-1+2\lambda}{2\left(  d-1\right)  }\right)
\end{align*}
Since $2=\frac{4\left(  1-\lambda\right)  }{d+1}+\frac{2\left(  d-1+2\lambda
\right)  }{d+1}$, we then have that the above is equal to%
\begin{align*}
&  =\frac{4\left(  1-\lambda\right)  }{d+1}\log\left(  \frac{d+1}{2}\right)
+\frac{2\left(  d-1+2\lambda\right)  }{d+1}\log\left(  \frac{d+1}{2}\right)
\\
&  \ \ \ -\frac{4\left(  1-\lambda\right)  }{d+1}\log\left(  1-\lambda\right)
\\
&  \ \ \ -\frac{2\left(  d-1+2\lambda\right)  }{d+1}\log\left(  \frac
{d-1+2\lambda}{2\left(  d-1\right)  }\right)  \\
&  =-\frac{4\left(  1-\lambda\right)  }{d+1}\log\left(  \frac{2\left(
1-\lambda\right)  }{d+1}\right)  \\
&  \ \ \ -\frac{2\left(  d-1+2\lambda\right)  }{d+1}\log\left(  \frac
{d-1+2\lambda}{d^{2}-1}\right)  .
\end{align*}
This last equality demonstrates that our new lower bound coincides exactly
with the uncertainty sum in (\ref{eq:werner-state-uncertainty-sum}) for all
dimensions $d$ and for all values of $\lambda$.
\end{proof}

Figure~\ref{fig:werner} plots the difference between our new bound and that of
Berta \textit{et al}.~for Werner states as a function of the dimension~$d$ and
the parameter $\lambda$.

\begin{figure}[ptb]
\includegraphics*[width=3.5in,keepaspectratio]{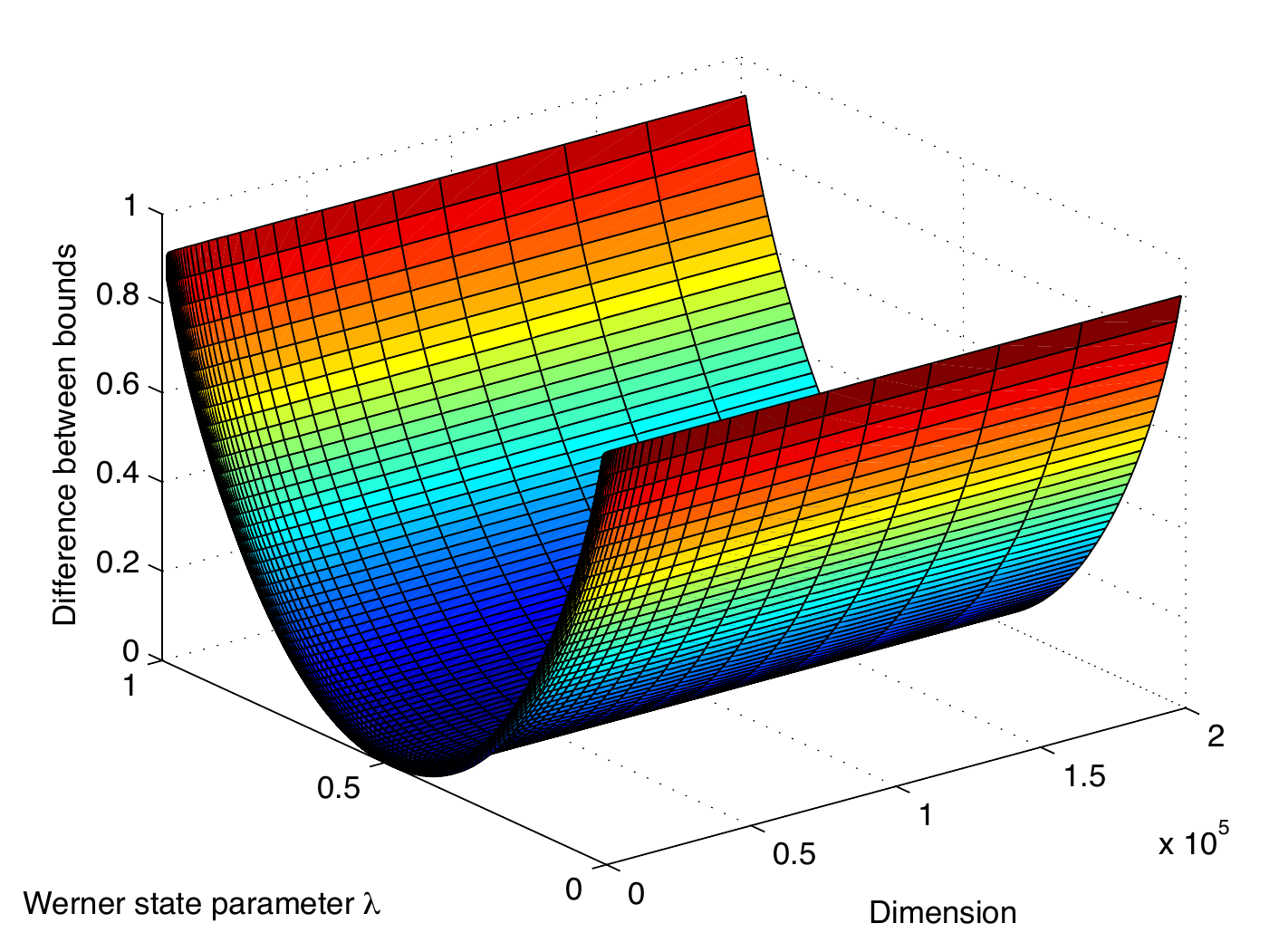}%
\caption{The difference between our new bound in (8) of the main text and that
of Berta \textit{et al}.~for Werner states as a function of the dimension~$d$
and the parameter $\lambda$.}%
\label{fig:werner}%
\end{figure}

\section{Isotropic States}

In this appendix, we prove that our lower bound in (8) of the main text is
perfectly tight for the class of higher-dimensional isotropic states and
Fourier-conjugate measurements. Isotropic states are defined as%
\[
\omega_{AB}=\lambda\Phi_{d}+\frac{1-\lambda}{d^{2}-1}\left(  I-\Phi
_{d}\right)  .
\]

\begin{theorem}
For isotropic states, the uncertainty sum $S\left(  P|B\right)  +S\left(
Q|B\right)  $ and our lower bound $-2\log c\left(  P,Q\right)  +S\left(
A|B\right)  +D_{A}\left(  \rho_{AB}\right)  -J_{A}\left(  \rho_{AB}\right)
$\ in (8) of the main text coincide, and they are equal to%
\[
-\frac{2\left(  d\lambda+1\right)  }{d+1}\log\frac{d\lambda+1}{d+1}%
-\frac{2d\left(  1-\lambda\right)  }{d+1}\log\frac{d\left(  1-\lambda\right)
}{d^{2}-1}%
\]
For these states, the lower bound $-2\log c\left(  P,Q\right)  +S\left(
A|B\right)  $ of Berta \textit{et al}.~is equal to%
\[
-\lambda\log\lambda-\left(  1-\lambda\right)  \log_{2}\left(  \frac{1-\lambda
}{d^{2}-1}\right)  .
\]

\end{theorem}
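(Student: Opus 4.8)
The plan is to follow exactly the template of the preceding Werner-state proof, exploiting the high symmetry of the isotropic state $\omega_{AB}$ together with the Fourier-conjugate structure of the measurements. First I would record the elementary entropies. Because the reduced states $\omega_A$ and $\omega_B$ are both maximally mixed, $S(A)=S(B)=\log d$. The operator $\omega_{AB}$ carries the single eigenvalue $\lambda$ on the maximally entangled vector supporting $\Phi_d$ and the eigenvalue $(1-\lambda)/(d^2-1)$ with multiplicity $d^2-1$ on its orthogonal complement, so that $S(\omega_{AB})=-\lambda\log\lambda-(1-\lambda)\log[(1-\lambda)/(d^2-1)]$ and hence $S(A|B)=S(\omega_{AB})-\log d$. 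Since Fourier-conjugate measurements give $c=1/\sqrt{d}$ and thus $-2\log c=\log d$, the Berta \textit{et al}.\ bound $-2\log c+S(A|B)$ collapses to $-\lambda\log\lambda-(1-\lambda)\log[(1-\lambda)/(d^2-1)]$, which is the second claimed expression.

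Next I would pin down the classical correlation. As in the Werner case I would invoke Chitambar's result \cite{chitambar12} that the measurement achieving $J_A(\omega_{AB})$ is a rank-one projective measurement in the computational basis, equivalently $J_A(\omega_{AB})=\log d-S(B|Z)$ for the conditional entropy $S(B|Z)$ produced by that measurement. Recalling from (\ref{proof-of-new-ineq}) that our lower bound equals $-2\log c+S(A)-2J_A(\omega_{AB})=2\log d-2J_A(\omega_{AB})$, the whole theorem reduces to evaluating the single quantity $S(B|Z)$ and checking that it simultaneously equals $\log d-J_A$ and the directly computed uncertainty sum.

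The core computation is therefore the dephased state. Writing $\omega_{AB}=\frac{1-\lambda}{d^2-1}I+\big(\lambda-\frac{1-\lambda}{d^2-1}\big)\Phi_d$ with $\Phi_d=\frac1d\sum_{i,j}|ii\rangle\langle jj|$, and using that dephasing $\Phi_d$ on $A$ yields $\frac1d\sum_z|z\rangle\langle z|\otimes|z\rangle\langle z|$, I would obtain the $Z$-measured state $\omega_{ZB}$ explicitly. Tracing out $B$ shows $H(Z)=\log d$, so $S(Z|B)=S(B|Z)+H(Z)-S(B)=S(B|Z)$. Reading off the block attached to each $|z\rangle\langle z|_A$ and normalizing, every conditional state on $B$ shares the same spectrum --- the eigenvalue $(d\lambda+1)/(d+1)$ once and $d(1-\lambda)/(d^2-1)$ with multiplicity $d-1$ --- so $S(Z|B)=-\frac{d\lambda+1}{d+1}\log\frac{d\lambda+1}{d+1}-\frac{d(1-\lambda)}{d+1}\log\frac{d(1-\lambda)}{d^2-1}$. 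By the symmetry of $\omega_{AB}$ the Fourier-conjugate observable $X$ gives an identical $S(X|B)$, and adding the two reproduces the first claimed expression.

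The final and most delicate step is to confirm that the new lower bound $2\log d-2J_A(\omega_{AB})$ agrees with this uncertainty sum. This is where equality throughout the chain (\ref{proof-of-new-ineq}) must hold: Chitambar's optimality of the computational-basis measurement forces $I(Z;B)=I(X;B)=J_A(\omega_{AB})$, so the only potentially loose step is Maassen--Uffink, which the symmetry of the isotropic state renders tight. Concretely I expect the algebra to hinge on the partition-of-unity identity $2=\frac{2(d\lambda+1)}{d+1}+\frac{2d(1-\lambda)}{d+1}$ --- the analog of $2=\frac{4(1-\lambda)}{d+1}+\frac{2(d-1+2\lambda)}{d+1}$ used in the Werner calculation --- which lets me distribute the $2\log d$ across the two eigenvalue branches and collapse the logarithms into the stated closed form. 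I anticipate this bookkeeping to be the main obstacle, since a single misplaced factor of $d$ in $J_A$ or in the conditional eigenvalues would destroy the exact coincidence; everything else is dictated by symmetry.
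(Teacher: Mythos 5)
Your proposal is correct and follows essentially the same route as the paper: compute $S(A|B)$ from the spectrum of $\omega_{AB}$, dephase in the computational basis to extract the conditional eigenvalues $\frac{d\lambda+1}{d+1}$ and $\frac{d(1-\lambda)}{d^2-1}$ (multiplicity $d-1$), use symmetry to equate $S(X|B)$ with $S(Z|B)$, and invoke Chitambar's evaluation of $J_A(\omega_{AB})$. The one difference is cosmetic but in your favor: by phrasing Chitambar's result as optimality of the computational-basis measurement, i.e.\ $J_A=\log d-S(B|Z)$, the final coincidence becomes the one-line identity $-2\log c+S(A)-2J_A=2\log d-2\left(\log d-S(B|Z)\right)=2S(B|Z)$, so the bookkeeping you anticipated as ``the main obstacle'' (and which the paper carries out explicitly via the identity $2=\frac{2(d\lambda+1)}{d+1}+\frac{2d(1-\lambda)}{d+1}$) is actually unnecessary.
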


\begin{proof}
For these states,
\begin{align*}
S\left(  A\right)   &  =S\left(  B\right)  =\log d,\\
S\left(  A|B\right)   &  =-\lambda\log\lambda-\left(  1-\lambda\right)
\log_{2}\left(  \frac{1-\lambda}{d^{2}-1}\right)  -\log d
\end{align*}
and Chitambar proved that \cite{chitambar12}%
\begin{align*}
&  J_{A}\left(  \omega_{AB}\right)  \\
&  =I(A;B)-D_{A}(\omega_{AB})\\
&  =2\log d+\lambda\log\lambda+\left(  1-\lambda\right)  \log_{2}\left(
\frac{1-\lambda}{d^{2}-1}\right)  \\
&  \ \ \ -\lambda\log\lambda-\left(  \frac{1-\lambda}{d+1}\right)  \log\left(
\frac{1-\lambda}{d^{2}-1}\right)  \\
&  \ \ +\frac{d\lambda+1}{d+1}\log\left(  \frac{d\lambda+1}{d\left(
d+1\right)  }\right)  \\
&  =2\log d+\frac{d\left(  1-\lambda\right)  }{d+1}\log\left(  \frac
{1-\lambda}{d^{2}-1}\right)  \\
&  \ \ \ +\frac{d\lambda+1}{d+1}\log\left(  \frac{d\lambda+1}{d\left(
d+1\right)  }\right)
\end{align*}
For computing $S\left(  Z|B\right)  $ we need to consider the following state:%
\begin{equation}
\omega_{ZB}\equiv\sum_{z}\left(  \left\vert z\right\rangle \left\langle
z\right\vert \otimes I\right)  \omega_{AB}\left(  \left\vert z\right\rangle
\left\langle z\right\vert \otimes I\right)  .
\end{equation}
Observing that%
\[
\sum_{z}\left(  \left\vert z\right\rangle \left\langle z\right\vert \otimes
I\right)  \ \Phi_{d}\ \left(  \left\vert z\right\rangle \left\langle
z\right\vert \otimes I\right)  =\sum_{z}\frac{1}{d}\left\vert z\right\rangle
\left\langle z\right\vert \otimes\left\vert z\right\rangle \left\langle
z\right\vert ,
\]
we have that%
\begin{align*}
\omega_{ZB} &  =\lambda\sum_{z}\frac{1}{d}\left\vert z\right\rangle
\left\langle z\right\vert \otimes\left\vert z\right\rangle \left\langle
z\right\vert \\
&  \ \ \ +\frac{1-\lambda}{d^{2}-1}\left(  I-\sum_{z}\frac{1}{d}\left\vert
z\right\rangle \left\langle z\right\vert \otimes\left\vert z\right\rangle
\left\langle z\right\vert \right)  \\
&  =\sum_{z}\frac{1}{d}\left\vert z\right\rangle \left\langle z\right\vert
\otimes\left(  \lambda-\frac{1-\lambda}{d^{2}-1}\right)  \left\vert
z\right\rangle \left\langle z\right\vert \\
&  \ \ \ +\frac{d^{2}\left(  1-\lambda\right)  }{d^{2}-1}\frac{I}{d^{2}}\\
&  =\sum_{z}\frac{1}{d}\left\vert z\right\rangle \left\langle z\right\vert
\otimes\left[  \frac{d^{2}\lambda-1}{d^{2}-1}\left\vert z\right\rangle
\left\langle z\right\vert +\frac{d\left(  1-\lambda\right)  }{d^{2}%
-1}I\right]
\end{align*}
We now focus on rewriting the conditional state above as%
\begin{align*}
&  \frac{d^{2}\lambda-1}{d^{2}-1}\left\vert z\right\rangle \left\langle
z\right\vert +\frac{d\left(  1-\lambda\right)  }{d^{2}-1}I\\
&  =\frac{d^{2}\lambda-1}{d^{2}-1}\left\vert z\right\rangle \left\langle
z\right\vert +\frac{d\left(  1-\lambda\right)  }{d^{2}-1}\left[  \left(
I-\left\vert z\right\rangle \left\langle z\right\vert \right)  +\left\vert
z\right\rangle \left\langle z\right\vert \right]  \\
&  =\frac{d\lambda+1}{d+1}\left\vert z\right\rangle \left\langle z\right\vert
+\frac{d\left(  1-\lambda\right)  }{d^{2}-1}\left(  I-\left\vert
z\right\rangle \left\langle z\right\vert \right)  .
\end{align*}
It follows for the state $\omega_{ZB}$ that $H\left(  Z\right)  =\log d$,
implying that%
\begin{align*}
S\left(  Z|B\right)   &  =S\left(  B|Z\right)  +H\left(  Z\right)  -S\left(
B\right)  \\
&  =S\left(  B|Z\right)
\end{align*}
Observing that the eigenvalues of the above conditional states do not depend
on $z$ in any way, the entropy $S\left(  B|Z\right)  $ is the von Neumann
entropy of the following state:%
\[
\frac{d\lambda+1}{d+1}\left\vert z\right\rangle \left\langle z\right\vert
+\frac{d\left(  1-\lambda\right)  }{d^{2}-1}\left(  I-\left\vert
z\right\rangle \left\langle z\right\vert \right)  .
\]
Calculating directly, we obtain%
\begin{align*}
S\left(  Z|B\right)   &  =S\left(  B|Z\right)  \\
&  =-\frac{d\lambda+1}{d+1}\log\frac{d\lambda+1}{d+1}\\
&  \ \ \ -\left(  d-1\right)  \frac{d\left(  1-\lambda\right)  }{d^{2}-1}%
\log\frac{d\left(  1-\lambda\right)  }{d^{2}-1}\\
&  =-\frac{d\lambda+1}{d+1}\log\frac{d\lambda+1}{d+1}\\
&  \ \ \ -\frac{d\left(  1-\lambda\right)  }{d+1}\log\frac{d\left(
1-\lambda\right)  }{d^{2}-1}.
\end{align*}
By following the same procedure, we can calculate that $S\left(  X|B\right)  $
is as follows:%
\begin{multline*}
S\left(  X|B\right)  =-\frac{d\lambda+1}{d+1}\log\frac{d\lambda+1}{d+1}\\
-\frac{d\left(  1-\lambda\right)  }{d+1}\log\frac{d\left(  1-\lambda\right)
}{d^{2}-1},
\end{multline*}
so that the uncertainty sum is%
\begin{multline}
S\left(  Z|B\right)  +S\left(  X|B\right)  =-\frac{2\left(  d\lambda+1\right)
}{d+1}\log\frac{d\lambda+1}{d+1}\label{eq:isotropic-state-uncertainty-sum}\\
-\frac{2d\left(  1-\lambda\right)  }{d+1}\log\frac{d\left(  1-\lambda\right)
}{d^{2}-1}%
\end{multline}
\begin{figure}[ptb]
\includegraphics*[width=3.5in,keepaspectratio]{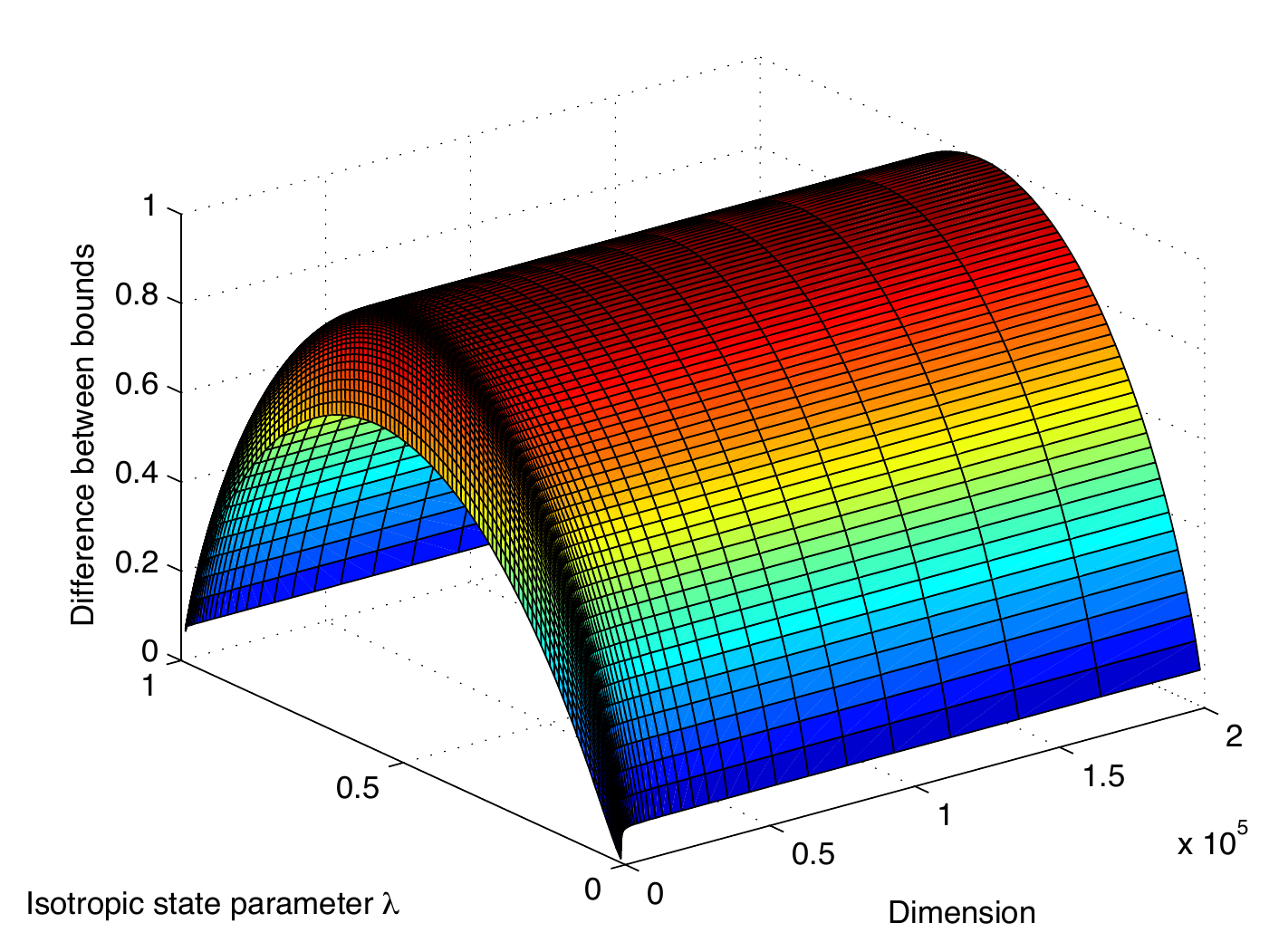}%
\caption{The difference between our new bound in (8) of the main text and that
of Berta \textit{et al}.~for isotropic states as a function of the
dimension~$d$ and the parameter $\lambda$.}%
\label{fig:isotropic}%
\end{figure}The Berta \textit{et al}.~lower bound is then%
\begin{multline*}
-2\log c+S\left(  A|B\right)  =-\lambda\log\lambda\\
-\left(  1-\lambda\right)  \log_{2}\left(  \frac{1-\lambda}{d^{2}-1}\right)  ,
\end{multline*}
while our new lower bound is%
\begin{multline*}
-2\log c+S\left(  A\right)  -2J_{A}\left(  \rho_{AB}\right)  =-2\log d\\
-\frac{2d\left(  1-\lambda\right)  }{d+1}\log\left(  \frac{1-\lambda}{d^{2}%
-1}\right)  -\frac{2\left(  d\lambda+1\right)  }{d+1}\log\left(
\frac{d\lambda+1}{d\left(  d+1\right)  }\right)
\end{multline*}
Since $2=\frac{2d\left(  1-\lambda\right)  }{d+1}+\frac{2\left(
d\lambda+1\right)  }{d+1}$, we can rewrite the above as%
\begin{multline*}
-\frac{2d\left(  1-\lambda\right)  }{d+1}\log d-\frac{2\left(  d\lambda
+1\right)  }{d+1}\log d\\
-\frac{2d\left(  1-\lambda\right)  }{d+1}\log\left(  \frac{1-\lambda}{d^{2}%
-1}\right)  \\
-\frac{2\left(  d\lambda+1\right)  }{d+1}\log\left(  \frac{d\lambda
+1}{d\left(  d+1\right)  }\right)  \\
=-\frac{2\left(  d\lambda+1\right)  }{d+1}\log\frac{d\lambda+1}{d+1}%
-\frac{2d\left(  1-\lambda\right)  }{d+1}\log\frac{d\left(  1-\lambda\right)
}{d^{2}-1}%
\end{multline*}
This last equality demonstrates that our new lower bound coincides exactly
with the uncertainty sum in (\ref{eq:isotropic-state-uncertainty-sum}) for all
dimensions $d$ and for all values of $\lambda$.
\end{proof}

Figure~\ref{fig:isotropic} plots the difference between our new bound and that
of Berta \textit{et al}.~for isotropic states as a function of the
dimension~$d$ and the parameter $\lambda$.


\begin{thebibliography}{30}
\expandafter\ifx\csname natexlab\endcsname\relax\def\natexlab#1{#1}\fi
\expandafter\ifx\csname bibnamefont\endcsname\relax
  \def\bibnamefont#1{#1}\fi
\expandafter\ifx\csname bibfnamefont\endcsname\relax
  \def\bibfnamefont#1{#1}\fi
\expandafter\ifx\csname citenamefont\endcsname\relax
  \def\citenamefont#1{#1}\fi
\expandafter\ifx\csname url\endcsname\relax
  \def\url#1{\texttt{#1}}\fi
\expandafter\ifx\csname urlprefix\endcsname\relax\def\urlprefix{URL }\fi
\providecommand{\bibinfo}[2]{#2}
\providecommand{\eprint}[2][]{\url{#2}}

\bibitem[{\citenamefont{Heisenberg}(1927)}]{Heisenberg}
\bibinfo{author}{\bibfnamefont{W.}~\bibnamefont{Heisenberg}},
  \bibinfo{journal}{Zeitschrift f\"ur Physik} \textbf{\bibinfo{volume}{43}},
  \bibinfo{pages}{172} (\bibinfo{year}{1927}).

\bibitem[{\citenamefont{Maassen and Uffink}(1988)}]{MU}
\bibinfo{author}{\bibfnamefont{H.}~\bibnamefont{Maassen}} \bibnamefont{and}
  \bibinfo{author}{\bibfnamefont{J.~B.~M.} \bibnamefont{Uffink}},
  \bibinfo{journal}{Physical Review Letters} \textbf{\bibinfo{volume}{60}},
  \bibinfo{pages}{1103} (\bibinfo{year}{1988}).

\bibitem[{\citenamefont{Berta et~al.}(2010)\citenamefont{Berta, Christandl,
  Colbeck, Renes, and Renner}}]{Berta}
\bibinfo{author}{\bibfnamefont{M.}~\bibnamefont{Berta}},
  \bibinfo{author}{\bibfnamefont{M.}~\bibnamefont{Christandl}},
  \bibinfo{author}{\bibfnamefont{R.}~\bibnamefont{Colbeck}},
  \bibinfo{author}{\bibfnamefont{J.~M.} \bibnamefont{Renes}}, \bibnamefont{and}
  \bibinfo{author}{\bibfnamefont{R.}~\bibnamefont{Renner}},
  \bibinfo{journal}{Nature Physics} \textbf{\bibinfo{volume}{6}},
  \bibinfo{pages}{659} (\bibinfo{year}{2010}).

\bibitem[{\citenamefont{Renes and Boileau}(2008)}]{RB08}
\bibinfo{author}{\bibfnamefont{J.~M.} \bibnamefont{Renes}} \bibnamefont{and}
  \bibinfo{author}{\bibfnamefont{J.-C.} \bibnamefont{Boileau}},
  \bibinfo{journal}{Physical Review A} \textbf{\bibinfo{volume}{78}},
  \bibinfo{pages}{032335} (\bibinfo{year}{2008}).

\bibitem[{\citenamefont{Wilde and Renes}(2012{\natexlab{a}})}]{WR12}
\bibinfo{author}{\bibfnamefont{M.~M.} \bibnamefont{Wilde}} \bibnamefont{and}
  \bibinfo{author}{\bibfnamefont{J.~M.} \bibnamefont{Renes}}
  (\bibinfo{year}{2012}{\natexlab{a}}), \bibinfo{note}{arXiv:1201.2906}.

\bibitem[{\citenamefont{Wilde and Renes}(2012{\natexlab{b}})}]{WR12a}
\bibinfo{author}{\bibfnamefont{M.~M.} \bibnamefont{Wilde}} \bibnamefont{and}
  \bibinfo{author}{\bibfnamefont{J.~M.} \bibnamefont{Renes}}
  (\bibinfo{year}{2012}{\natexlab{b}}), \bibinfo{note}{arXiv:1203.5794}.

\bibitem[{\citenamefont{Popper and v.~Weizs\"acker}(1934)}]{Popper}
\bibinfo{author}{\bibfnamefont{K.}~\bibnamefont{Popper}} \bibnamefont{and}
  \bibinfo{author}{\bibnamefont{v.~Weizs\"acker}},
  \bibinfo{journal}{Naturwissenschaften} \textbf{\bibinfo{volume}{22}},
  \bibinfo{pages}{807} (\bibinfo{year}{1934}).

\bibitem[{\citenamefont{Robertson}(1929)}]{rob}
\bibinfo{author}{\bibfnamefont{H.~P.} \bibnamefont{Robertson}},
  \bibinfo{journal}{Physical Review} \textbf{\bibinfo{volume}{34}},
  \bibinfo{pages}{163} (\bibinfo{year}{1929}).

\bibitem[{\citenamefont{Schr\"odinger}(1930)}]{sch}
\bibinfo{author}{\bibfnamefont{E.}~\bibnamefont{Schr\"odinger}},
  \bibinfo{journal}{Proceedings of The Prussian Academy of
  Sciences---Physics-Mathematical Section} \textbf{\bibinfo{volume}{XIX}},
  \bibinfo{pages}{296} (\bibinfo{year}{1930}).

\bibitem[{\citenamefont{Deutsch}(1983)}]{Deutch}
\bibinfo{author}{\bibfnamefont{D.}~\bibnamefont{Deutsch}},
  \bibinfo{journal}{Physical Review Letters} \textbf{\bibinfo{volume}{50}},
  \bibinfo{pages}{631} (\bibinfo{year}{1983}).

\bibitem[{\citenamefont{Wehner and Winter}(2010)}]{Wehner}
\bibinfo{author}{\bibfnamefont{S.}~\bibnamefont{Wehner}} \bibnamefont{and}
  \bibinfo{author}{\bibfnamefont{A.}~\bibnamefont{Winter}},
  \bibinfo{journal}{New Journal of Physics} \textbf{\bibinfo{volume}{12}},
  \bibinfo{pages}{025009} (\bibinfo{year}{2010}).

\bibitem[{\citenamefont{Hirschman}(1957)}]{Hirschmann}
\bibinfo{author}{\bibfnamefont{I.~I.} \bibnamefont{Hirschman}},
  \bibinfo{journal}{American Journal of Mathematics}
  \textbf{\bibinfo{volume}{79}}, \bibinfo{pages}{152} (\bibinfo{year}{1957}).

\bibitem[{\citenamefont{Beckner}(1975)}]{Beckner}
\bibinfo{author}{\bibfnamefont{W.}~\bibnamefont{Beckner}},
  \bibinfo{journal}{Annals of Mathematics} \textbf{\bibinfo{volume}{102}},
  \bibinfo{pages}{159} (\bibinfo{year}{1975}).

\bibitem[{\citenamefont{Bialynicki-Birula and Mycielski}(1975)}]{bial1}
\bibinfo{author}{\bibfnamefont{I.}~\bibnamefont{Bialynicki-Birula}}
  \bibnamefont{and}
  \bibinfo{author}{\bibfnamefont{J.}~\bibnamefont{Mycielski}},
  \bibinfo{journal}{Communications in Mathematical Physics}
  \textbf{\bibinfo{volume}{44}}, \bibinfo{pages}{129} (\bibinfo{year}{1975}).

\bibitem[{\citenamefont{Partovi}(1983)}]{partovi}
\bibinfo{author}{\bibfnamefont{M.~H.} \bibnamefont{Partovi}},
  \bibinfo{journal}{Physical Review Letters} \textbf{\bibinfo{volume}{50}},
  \bibinfo{pages}{1883} (\bibinfo{year}{1983}).

\bibitem[{\citenamefont{Bialynicki-Birula}(1984)}]{bial2}
\bibinfo{author}{\bibfnamefont{I.}~\bibnamefont{Bialynicki-Birula}},
  \bibinfo{journal}{Physics Letters A} \textbf{\bibinfo{volume}{103A}},
  \bibinfo{pages}{253} (\bibinfo{year}{1984}).

\bibitem[{\citenamefont{Kraus}(1987)}]{kraus}
\bibinfo{author}{\bibfnamefont{K.}~\bibnamefont{Kraus}},
  \bibinfo{journal}{Physical Review D} \textbf{\bibinfo{volume}{35}},
  \bibinfo{pages}{3070} (\bibinfo{year}{1987}).

\bibitem[{\citenamefont{Krishna and Parthasarathy}(2002)}]{KKRP}
\bibinfo{author}{\bibfnamefont{M.}~\bibnamefont{Krishna}} \bibnamefont{and}
  \bibinfo{author}{\bibfnamefont{K.~R.} \bibnamefont{Parthasarathy}},
  \bibinfo{journal}{Sankhya: The Indian Journal of Statistics, Series A}
  \textbf{\bibinfo{volume}{64}}, \bibinfo{pages}{842} (\bibinfo{year}{2002}).

\bibitem[{\citenamefont{Coles et~al.}(2011{\natexlab{a}})\citenamefont{Coles,
  Yu, Gheorghiu, and Griffiths}}]{Coles11}
\bibinfo{author}{\bibfnamefont{P.~J.} \bibnamefont{Coles}},
  \bibinfo{author}{\bibfnamefont{L.}~\bibnamefont{Yu}},
  \bibinfo{author}{\bibfnamefont{V.}~\bibnamefont{Gheorghiu}},
  \bibnamefont{and} \bibinfo{author}{\bibfnamefont{R.~B.}
  \bibnamefont{Griffiths}}, \bibinfo{journal}{Physical Review A}
  \textbf{\bibinfo{volume}{83}}, \bibinfo{pages}{062338}
  (\bibinfo{year}{2011}{\natexlab{a}}).

\bibitem[{\citenamefont{Coles et~al.}(2011{\natexlab{b}})\citenamefont{Coles,
  Yu, and Zwolak}}]{Coles}
\bibinfo{author}{\bibfnamefont{P.~J.} \bibnamefont{Coles}},
  \bibinfo{author}{\bibfnamefont{L.}~\bibnamefont{Yu}}, \bibnamefont{and}
  \bibinfo{author}{\bibfnamefont{M.}~\bibnamefont{Zwolak}}
  (\bibinfo{year}{2011}{\natexlab{b}}), \bibinfo{note}{arXiv:1105.4865}.

\bibitem[{\citenamefont{Chuan-Feng et~al.}(2011)\citenamefont{Chuan-Feng, Xu,
  Xu, Li, and Guo}}]{exptl1}
\bibinfo{author}{\bibfnamefont{L.}~\bibnamefont{Chuan-Feng}},
  \bibinfo{author}{\bibfnamefont{J.-S.} \bibnamefont{Xu}},
  \bibinfo{author}{\bibfnamefont{X.-Y.} \bibnamefont{Xu}},
  \bibinfo{author}{\bibfnamefont{K.}~\bibnamefont{Li}}, \bibnamefont{and}
  \bibinfo{author}{\bibfnamefont{G.-C.} \bibnamefont{Guo}},
  \bibinfo{journal}{Nature Physics} \textbf{\bibinfo{volume}{7}},
  \bibinfo{pages}{752} (\bibinfo{year}{2011}).

\bibitem[{\citenamefont{Prevedel et~al.}(2011)\citenamefont{Prevedel, Hamel,
  Colbeck, Fisher, and Resch}}]{exptl2}
\bibinfo{author}{\bibfnamefont{R.}~\bibnamefont{Prevedel}},
  \bibinfo{author}{\bibfnamefont{D.~R.} \bibnamefont{Hamel}},
  \bibinfo{author}{\bibfnamefont{R.}~\bibnamefont{Colbeck}},
  \bibinfo{author}{\bibfnamefont{K.}~\bibnamefont{Fisher}}, \bibnamefont{and}
  \bibinfo{author}{\bibfnamefont{K.~J.} \bibnamefont{Resch}},
  \bibinfo{journal}{Nature Physics} \textbf{\bibinfo{volume}{7}},
  \bibinfo{pages}{757} (\bibinfo{year}{2011}).

\bibitem[{\citenamefont{Ollivier and Zurek}(2001)}]{zurek}
\bibinfo{author}{\bibfnamefont{H.}~\bibnamefont{Ollivier}} \bibnamefont{and}
  \bibinfo{author}{\bibfnamefont{W.~H.} \bibnamefont{Zurek}},
  \bibinfo{journal}{Physical Review Letters} \textbf{\bibinfo{volume}{88}},
  \bibinfo{pages}{017901} (\bibinfo{year}{2001}).

\bibitem[{\citenamefont{Henderson and Vedral}(2001)}]{ved}
\bibinfo{author}{\bibfnamefont{L.}~\bibnamefont{Henderson}} \bibnamefont{and}
  \bibinfo{author}{\bibfnamefont{V.}~\bibnamefont{Vedral}},
  \bibinfo{journal}{Journal of Physics A: Mathematical and General}
  \textbf{\bibinfo{volume}{34}}, \bibinfo{pages}{6899} (\bibinfo{year}{2001}).

\bibitem[{\citenamefont{Modi et~al.}(2011)\citenamefont{Modi, Brodutch, Cable,
  Paterek, and Vedral}}]{kavan}
\bibinfo{author}{\bibfnamefont{K.}~\bibnamefont{Modi}},
  \bibinfo{author}{\bibfnamefont{A.}~\bibnamefont{Brodutch}},
  \bibinfo{author}{\bibfnamefont{H.}~\bibnamefont{Cable}},
  \bibinfo{author}{\bibfnamefont{T.}~\bibnamefont{Paterek}}, \bibnamefont{and}
  \bibinfo{author}{\bibfnamefont{V.}~\bibnamefont{Vedral}}
  (\bibinfo{year}{2011}), \bibinfo{note}{arXiv:1112.6238}.

\bibitem[{\citenamefont{Devetak and Winter}(2004)}]{devetak}
\bibinfo{author}{\bibfnamefont{I.}~\bibnamefont{Devetak}} \bibnamefont{and}
  \bibinfo{author}{\bibfnamefont{A.}~\bibnamefont{Winter}},
  \bibinfo{journal}{IEEE Transactions on Information Theory}
  \textbf{\bibinfo{volume}{50}}, \bibinfo{pages}{3183} (\bibinfo{year}{2004}).

\bibitem[{\citenamefont{Koashi and Winter}(2004)}]{kw}
\bibinfo{author}{\bibfnamefont{M.}~\bibnamefont{Koashi}} \bibnamefont{and}
  \bibinfo{author}{\bibfnamefont{A.}~\bibnamefont{Winter}},
  \bibinfo{journal}{Physical Review A} \textbf{\bibinfo{volume}{69}},
  \bibinfo{pages}{022309} (\bibinfo{year}{2004}).

\bibitem[{\citenamefont{Chitambar}(2011)}]{chitambar12}
\bibinfo{author}{\bibfnamefont{E.}~\bibnamefont{Chitambar}}
  (\bibinfo{year}{2011}), \bibinfo{note}{arXiv:1110.3057}.

\bibitem[{\citenamefont{Renes and Boileau}(2009)}]{PhysRevLett.103.020402}
\bibinfo{author}{\bibfnamefont{J.~M.} \bibnamefont{Renes}} \bibnamefont{and}
  \bibinfo{author}{\bibfnamefont{J.-C.} \bibnamefont{Boileau}},
  \bibinfo{journal}{Phys. Rev. Lett.} \textbf{\bibinfo{volume}{103}},
  \bibinfo{pages}{020402} (\bibinfo{year}{2009}).

\bibitem[{\citenamefont{Carlen and Lieb}(2012)}]{CL12}
\bibinfo{author}{\bibfnamefont{E.~A.} \bibnamefont{Carlen}} \bibnamefont{and}
  \bibinfo{author}{\bibfnamefont{E.~H.} \bibnamefont{Lieb}}
  (\bibinfo{year}{2012}), \bibinfo{note}{arXiv:1203.4719}.

\end{thebibliography}
\end{document}